\renewcommand\floatpagefraction{0.9}
\newtheorem{theorem}{Theorem}[section]
\newtheorem{lemma}[theorem]{Lemma}
\newtheorem{observation}[theorem]{Observation}
\theoremstyle{definition}
\definecolor{gray}{RGB}{70,70,70}
\definecolor{white}{RGB}{255,255,255}
\definecolor{orange}{RGB}{255,158,61}
\definecolor{blue}{RGB}{19,158,251}
\definecolor{codegreen}{rgb}{0,0.6,0}
\definecolor{codegray}{rgb}{0.5,0.5,0.5}
\definecolor{codepurple}{rgb}{0.58,0,0.82}
\definecolor{backcolour}{rgb}{0.95,0.95,0.92}
\lstdefinestyle{mystyle}{
    backgroundcolor=\color{backcolour},
    commentstyle=\color{codegreen},
    keywordstyle=\color{magenta},
    numberstyle=\tiny\color{codegray},
    stringstyle=\color{codepurple},
    basicstyle=\ttfamily\footnotesize,
    breakatwhitespace=false,
    breaklines=true,
    captionpos=b,
    keepspaces=true,
    numbers=left,
    numbersep=5pt,
    showspaces=false,
    showstringspaces=false,
    showtabs=false,
    tabsize=4
}
\newcommand{\ts}{\textsuperscript}
\begin{document}

\title{Designing a Fast and Flexible Quantum State Simulator}
\author{Saveliy Yusufov}
\affiliation{Wells Fargo}
\author{Charlee Stefanski}
\orcid{0000-0001-9856-5955}
\affiliation{Wells Fargo}
\author{Constantin Gonciulea}
\orcid{0000-0001-5870-4586}
\affiliation{Wells Fargo}

\maketitle

\begin{abstract}
    This paper describes the design and implementation of Spinoza, a fast and flexible quantum simulator written in
    Rust.
    Spinoza simulates the evolution of a quantum system's state by applying quantum gates, with the core design
    principle being that a single-qubit gate applied to a target qubit preserves the probability of pairs of
    amplitudes corresponding to measurement outcomes that differ only in the target qubit.
    Multiple strategies are employed for selecting pairs of amplitudes, depending on the gate type and other
    parameters, to optimize performance.
    Specific optimizations are also implemented for certain gate types and target qubits.

    Spinoza is intended to enable the development of quantum computing solutions by offering developers a simple,
    flexible, and fast tool for classical simulation.
    In this paper we provide details about the design and usage examples.
    Furthermore, we compare Spinoza's performance against several other open-source simulators to demonstrate its
    strengths.
\end{abstract}

\section{Introduction}

Several types of users, including researchers, software developers, and students, use quantum simulators to experiment
with quantum algorithms and circuits in a controlled environment without needing access to actual quantum hardware.
Classical simulation is also an essential tool for analyzing results from quantum hardware.
Classical simulators are essential in the Noisy Intermediate-Scale Quantum (NISQ) era since currently available
quantum devices have limited capacity for depth, circuit complexity, and high error rates~\cite{Preskill2018}.
By using classical simulators, researchers can continue to develop and test quantum algorithms despite these
limitations.
However, simulating quantum computers with classical computers is a difficult task.
The computational resources and simulation time increases exponentially with the size of the quantum system (qubits).
This complexity also demonstrates the potential utility and promise of quantum computing to perform tasks beyond the
reach of classical computers.

Spinoza is designed to fulfill the need for a quantum simulator that is flexible and efficient for use in several
different environments, including personal computers.
Spinoza offers all the essential capabilities for quantum algorithm development with both Rust and Python
interfaces.
In standard benchmarks, Spinoza is one of the fastest simulators when compared with several other open-source simulators.

This paper consists of the following: In Section~\ref{sec:prelim} discusses the preliminary concepts imperative to
the design and implementation of Spinoza.
Section~\ref{sec:design} provides an overview of the design.
Section~\ref{sec:impl} details the implementation in Rust and in addition to the Python module.
Section~\ref{sec:opt} discusses the optimization techniques used in the simulator.
Section~\ref{sec:benchmarks} compares the performance of Spinoza with several existing simulators.

\begin{figure}[!ht]
    \centering
    \includegraphics[width=.5\linewidth]{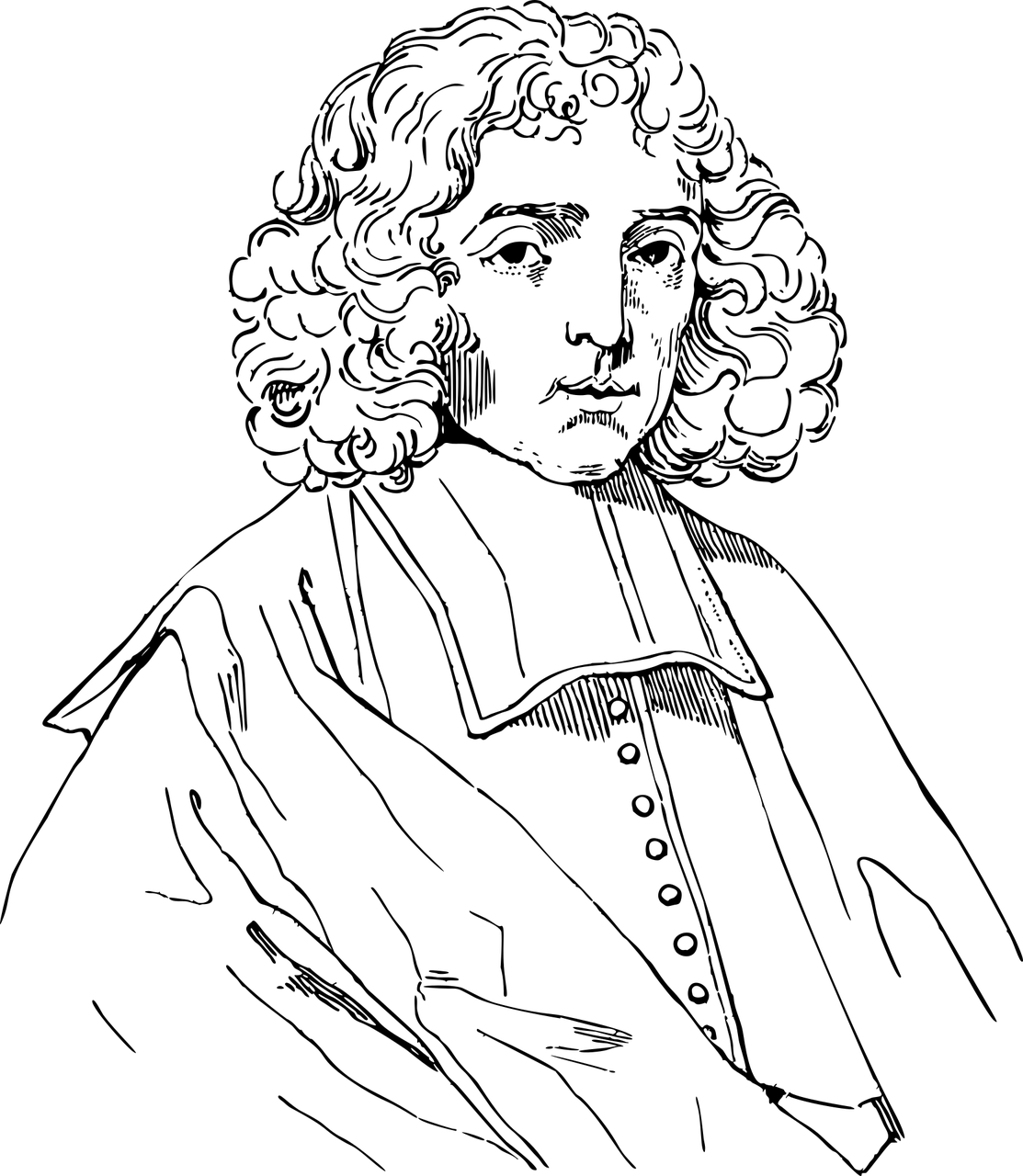}
    \captionof{figure}{\textbf{Baruch Spinoza (1632-1677)~\cite{spinoza}} This
    project is called ``Spinoza'', named after Baruch Spinoza, the 17\ts{th} century Dutch Philosopher.}
\end{figure}

\section{\label{sec:prelim}Preliminaries}

The state of a quantum system with $n > 0$ qubits can be modeled as a list of amplitudes corresponding to the
$2^{n}$ possible outcomes of the state measurement in the computational basis.
This is typically expressed using Dirac's ket notation:

\begin{equation}
    \label{eqn:gamma_state}
    \ket{\gamma}_n = \sum_{i = 0}^{2^n-1} a_{i} \ket{i}
\end{equation}

The probability of the outcome corresponding to $\ket{i}$ is $|a_{i}|^2$, for $i \in \{0, \mathellipsis, 2^{n} -1
\}$, and $\sum_{i = 0}^{2^{n}-1} |a_{i}|^2 = 1$.

Depending on the context, an outcome can be interpreted as an integer, $i \in \{0, \mathellipsis, 2^{n} -1 \}$, or
its binary string representation, where each digit corresponds to a qubit.
For example, a $3$-qubit quantum state can be expressed as:

\begin{equation}
    \label{eqn:gamma_3}
    \begin{split}
        \ket{\gamma}_{3} = \text{ } &a_{0}\ket{000} + a_{1}\ket{001} + a_{2}\ket{010} + a_{3}\ket{011} + \\
            &a_{4}\ket{100} + a_{5}\ket{101} + a_{6}\ket{110} + a_{7}\ket{111}.
    \end{split}
\end{equation}

\section{\label{sec:design}Approach and Design}

At the core of the design is the fact that single-qubit gate transformations act on amplitude pairs, preserving the
probability of individual pairs.
A unified approach for handling target and control qubits allows for an efficient implementation of multi-control
transformations.
Gate transformations are optimized according to gate type.

Spinoza's implementation of the evolution of the state of a quantum system by applying a (controlled) gate has two
steps:

\begin{enumerate}
    \item Select the pairs of amplitudes that are updated together. Note that
        for some gates (e.g., Phase gate) only one side of the pair is needed.
    \item Apply the gate to update either one or both sides of each such pair.
\end{enumerate}

When a single-qubit gate is applied on a target qubit, $t \in \{0, \ldots,
n-1\}$, the probability of the two outcomes that differ only in the $t$
position is preserved. We can rewrite the state defined in
Eq.~\ref{eqn:gamma_state} by grouping such pairs:

\begin{equation}
    \label{eqn:pairs}
    \ket{\gamma}_n = \sum_{j = 0}^{2^{n-1} - 1} \left( a_{z(j, t)} \ket{z(j, t)} +
    a_{o(j, t)} \ket{o(j, t)} \right)
\end{equation}

where $z(j, t)$ is the $n-1$ digit binary expansion of $j$ with $0$ inserted in
position $t$, and $o(j, t)$ is the $n-1$ digit binary expansion of $j$ with $1$
inserted in position $t$.

For example, given a $3$-qubit state, $\ket{\gamma}_{3}$ (as defined in
Eq~\ref{eqn:gamma_3}), a single-qubit gate applied to the middle qubit (i.e.,
$t = 1$) would yield the following pairs of outcomes:
\begin{equation*}
    \begin{split}
        & z(0,1) = 0 = 0\textbf{0}0_2 \text{ and } o(0, 1) = 2 = 0\textbf{1}0_2 \\
        & z(1,1) = 1 = 0\textbf{0}1_2 \text{ and } o(1, 1) = 3 = 0\textbf{1}1_2 \\
        & z(2,1) = 4 = 1\textbf{0}0_2 \text{ and } o(2, 1) = 6 = 1\textbf{1}0_2 \\
        & z(3,1) = 5 = 1\textbf{0}1_2 \text{ and } o(3, 1) = 7 = 1\textbf{1}1_2 \\
    \end{split}
\end{equation*}

Considering these outcome pairs, we can define the state in
Eq.~\ref{eqn:gamma_3} by pair groupings:

\begin{equation*}
    \begin{split}
        \ket{\gamma}_3 = \text{ } &(a_{0}\ket{000} + a_{2}\ket{010}) + \\
        &(a_{1}\ket{001} + a_{3}\ket{011}) + \\
        &(a_{4}\ket{100} + a_{6}\ket{110}) + \\
        &(a_{5}\ket{101} + a_{7}\ket{111})
    \end{split}
\end{equation*}

\subsection{Pair Selection}\label{subsec:pair-selection}

When a single-qubit gate is applied to a target qubit, it changes the amplitudes and corresponding probabilities of
measuring $0$ or $1$ for that qubit.
The effect of the gate application can be understood as preserving the probability of pairs of outcomes that differ
only in the target qubit, as formalized in Eq.~\ref{eqn:pairs}.

Four strategies for selecting pairs are described below.
For each strategy, we use the example of a $3$-qubit quantum state and the middle qubit as the target.
Namely, we set $t = 1$.

The list of binary expressions of the outcomes with $0$ in the target position is:

\begin{equation*}
    [0\textbf{0}0, \text{ } 0\textbf{0}1, \text{ } 1\textbf{0}0, \text{ } 1\textbf{0}1],
\end{equation*}

Similarly, the list of corresponding outcomes with $1$ in the target position is:

\begin{equation*}
    [0\textbf{1}0, \text{ } 0\textbf{1}1, \text{ } 1\textbf{1}0, \text{ } 1\textbf{1}1].
\end{equation*}

\subsubsection{Strategy 0: Traverse and Recognize}\label{subsubsec:strategy0}

We start with the list of all possible outcomes represented as binary strings:

\begin{equation*}
    [000, \text{ } 001, \text{ } 010, \text{ } 011, \text{ } 100, \text{ } 101, \text{ } 110, \text{ } 111].
\end{equation*}

For a system with $n > 0$ qubits, we traverse the list, and we check if an outcome $i \in \{0, \ldots, 2^{n} -1\}$
has $1$ in the target qubit position, $t \in \{0, \ldots, n - 1\}$. In code, the condition can be implemented as the
boolean expression \texttt{(i $\gg$ t) \& 1}.

We have also found three other strategies that offer better performance depending on the gate.
They have the following observation in common:

\begin{observation}\label{obs}
For a given prefix-suffix combination  $j \in \{0, \ldots, 2^{n-1}\}$, target
position $t \in \{0, \ldots, n-1\}$, if $q$ and $r$ are the quotient and
remainder, respectively, of dividing $j$ by $2^{t}$, (i.e., $j = q \cdot 2^{t}
+ r$, with $r \in \{0, \ldots, 2^{t} - 1\}$), then

\begin{equation}
    \begin{split}
        z(j, t) &= 2q \cdot 2^{t} + r \\
        o(j, t) &= (2q + 1) \cdot 2^{t} + r
    \end{split}\label{eq:q_and_r}
\end{equation}

The quotient $q$ and the remainder $r$ are the prefix and suffix with respect
to the target position $t$ in the binary representation of the outcome
corresponding to $j$.
\end{observation}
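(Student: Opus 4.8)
The plan is to prove Observation~\ref{obs} directly from the definitions of $z(j,t)$ and $o(j,t)$ by tracking how the binary digits rearrange under the insertion of a bit at position $t$. First I would recall the defining property: $z(j,t)$ is the integer whose $n$-bit binary expansion is obtained from the $(n-1)$-bit expansion of $j$ by inserting a $0$ in position $t$, and $o(j,t)$ is the same with a $1$ inserted. The key observation is that inserting a bit in position $t$ splits the bits of $j$ into a low part (the bits strictly below position $t$, i.e.\ positions $0,\ldots,t-1$) which stays put, and a high part (positions $t,\ldots,n-2$ of $j$) which is shifted up by one position. The low part is exactly $r = j \bmod 2^{t}$ and the high part, as an integer, is exactly $q = \lfloor j / 2^{t}\rfloor$.

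Then I would write $j = q\cdot 2^{t} + r$ and argue that after the insertion, the high part $q$ now occupies bit positions starting at $t+1$, so its integer contribution becomes $q\cdot 2^{t+1} = 2q\cdot 2^{t}$; the inserted bit $b \in \{0,1\}$ sits at position $t$, contributing $b\cdot 2^{t}$; and the low part $r$ is unchanged. Summing gives $2q\cdot 2^{t} + b\cdot 2^{t} + r = (2q+b)\cdot 2^{t} + r$. Setting $b = 0$ yields $z(j,t) = 2q\cdot 2^{t} + r$ and setting $b = 1$ yields $o(j,t) = (2q+1)\cdot 2^{t} + r$, which is exactly Eq.~\ref{eq:q_and_r}. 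The final sentence of the statement — that $q$ and $r$ are the prefix and suffix of the outcome — is then immediate, since $r < 2^{t}$ occupies the low $t$ bits and $q$ occupies the remaining high bits of both $z(j,t)$ and $o(j,t)$.

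For rigor I would make the ``inserting a bit'' operation precise rather than rely on intuition: given $j$ with binary digits $j = \sum_{k=0}^{n-2} j_k 2^{k}$, define the inserted value with bit $b$ at position $t$ as $\sum_{k=0}^{t-1} j_k 2^{k} + b\cdot 2^{t} + \sum_{k=t}^{n-2} j_k 2^{k+1}$, then recognize the first sum as $r$ and the reindexed third sum as $2q\cdot 2^{t}$. This reduces everything to the elementary identities $r = \sum_{k<t} j_k 2^{k}$ and $q = \sum_{k\ge t} j_k 2^{k-t}$, which are just the standard quotient–remainder decomposition of $j$ by $2^{t}$; a one-line uniqueness argument ($0 \le r < 2^{t}$ forces this decomposition) closes the gap. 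I do not expect a genuine obstacle here — the only mild care needed is bookkeeping with the index shift $k \mapsto k+1$ on the high bits and being explicit that the claimed formulas are consistent with the stated range $r \in \{0,\ldots,2^{t}-1\}$, so the proof is essentially a careful unwinding of definitions.
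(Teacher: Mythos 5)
Your argument is correct and is exactly the reasoning the paper leaves implicit: the paper states this as an Observation without a formal proof, taking for granted that $q$ and $r$ are the prefix and suffix of the binary expansion of $j$ relative to position $t$, so that inserting a bit $b$ at position $t$ shifts the prefix up by one place and contributes $b\cdot 2^{t}$. Your digit-by-digit unwinding with the index shift $k \mapsto k+1$ on the high bits is just the careful formalization of that same decomposition, so there is nothing to add.
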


\subsubsection{Strategy 1: Group and Traverse}\label{subsubsec:strategy1}

We start with the list of all possible measurement outcomes represented as binary strings.
We split the list in chunks of size $2^{t}$.
In this example $t = 1$, so the chunk size is $2^{t} = 2$.
The chunks alternate between containing $2^{t}$ items with $0$ in the target position, and $2^{t}$ items with $1$ in the
target position.
Figure~\ref{fig:pair_shift_ex_t1} illustrates this strategy.

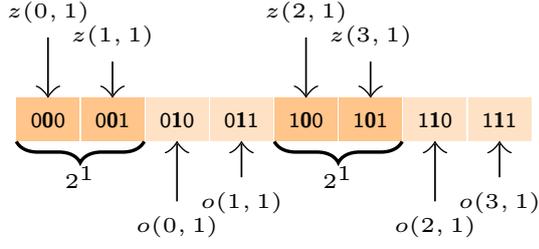
\begin{figure}[ht]
    \resizebox{\columnwidth}{!} {
        \begin{tikzpicture}[scale=0.5, every node/.style={scale=0.5}, font=\sffamily,every label/.append
        style={font=\tiny\sffamily,align=center}]
            % >>>>>>>>>>>>> NODES <<<<<<<<<<<<<<<<
            \tikzset{brace/.style={decoration={brace, mirror},decorate}}

            \node[rectangle,fill=orange!60, ultra thin, minimum height = 3.05mm, minimum width = 5mm,
                font=\tiny\sffamily](t1){0\textbf{0}0};
            \node[rectangle,fill=orange!60, ultra thin, minimum height = 3.05mm, minimum width = 5mm, font=\tiny\sffamily,
                right=0.05mm of t1](t2){0\textbf{0}1};
            \node[rectangle,fill=orange!30, ultra thin, minimum height = 3.05mm, minimum width = 5mm, font=\tiny\sffamily,
                right=0.05mm of t2](t3){0\textbf{1}0};
            \node[rectangle,fill=orange!30, ultra thin, minimum height = 3.05mm, minimum width = 5mm, font=\tiny\sffamily,
                right=0.05mm of t3](t4){0\textbf{1}1};
            \node[rectangle,fill=orange!60, ultra thin, minimum height = 3.05mm, minimum width = 5mm, font=\tiny\sffamily,
                right=0.05mm of t4](t5){1\textbf{0}0};
            \node[rectangle,fill=orange!60, ultra thin, minimum height = 3.05mm, minimum width = 5mm, font=\tiny\sffamily,
                right=0.05mm of t5](t6){1\textbf{0}1};
            \node[rectangle,fill=orange!30, ultra thin, minimum height = 3.05mm, minimum width = 5mm, font=\tiny\sffamily,
                right=0.05mm of t6](t7){1\textbf{1}0};
            \node[rectangle,fill=orange!30, ultra thin, minimum height = 3.05mm, minimum width = 5mm, font=\tiny\sffamily,
                right=0.05mm of t7](t8){1\textbf{1}1};

            \draw [brace,decoration={raise=0.01ex}] (t1.south west) -- node [font=\tiny\sffamily, yshift=-2ex]
                {$2^1$}(t2.south east);
            \draw [brace,decoration={raise=0.01ex}] (t5.south west) -- node [font=\tiny\sffamily, yshift=-2ex]
                {$2^1$}(t6.south east);

            \draw[<-, very thin](t1.north) -- +(0,5mm) node[above, font=\tiny\sffamily]{$z(0, 1)$};
            \draw[<-, very thin](t2.north) -- +(0,3mm) node[above, font=\tiny\sffamily]{$z(1, 1)$};
            \draw[<-, very thin](t5.north) -- +(0,5mm) node[above, font=\tiny\sffamily]{$z(2, 1)$};
            \draw[<-, very thin](t6.north) -- +(0,3mm) node[above, font=\tiny\sffamily]{$z(3, 1)$};

            \draw[<-, very thin](t3.south) -- +(0,-5mm) node[below, font=\tiny\sffamily]{$o(0, 1)$};
            \draw[<-, very thin](t4.south) -- +(0,-3mm) node[below, font=\tiny\sffamily]{$o(1, 1)$};
            \draw[<-, very thin](t7.south) -- +(0,-5mm) node[below, font=\tiny\sffamily]{$o(2, 1)$};
            \draw[<-, very thin](t8.south) -- +(0,-3mm) node[below, font=\tiny\sffamily]{$o(3, 1)$};

        \end{tikzpicture}
    }
    \captionof{figure}{Visualization of the group and traverse strategy of pair
    selection for a $3$-qubit state with $t=1$.}
    \label{fig:pair_shift_ex_t1}
\end{figure}

This strategy is an optimization of Strategy $0$, (see Section~\ref{subsubsec:strategy0}).
We leverage the fact that within a given chunk of length $2^{t}$, the target qubit does not change.
This strategy is especially useful when each side of the pair can be updated independent of the other side.
In the case of the $R_{z}$-gate, this strategy provides significant optimization opportunities, which are discussed in
Section~\ref{sec:opt}.

\subsubsection{Strategy 2: Concatenate Prefix, Target and Suffix}\label{subsubsec:strategy2}

Strategy 2 requires us to generate the prefix, append the target, and finally generate the suffix.
The prefix and suffix are the possible values expressed with the qubits before and after the target qubit,
respectively.
For this example, the possible prefix values are $0$ and $1$.

For each possible prefix, we append $0$ or $1$ in the target position:

\begin{equation*}
    \begin{split}
    & 00 \text{  } 01 \\
    & 10 \text{  } 11. \\
    \end{split}
\end{equation*}

The possible generated suffixes are $0$ and $1$. Each possible suffix is appended to each combination of prefix and
target:

\begin{equation*}
    \begin{split}
        & 000 \text{ } 010 \\
        & 001 \text{ } 011 \\
        & 100 \text{ } 110 \\
        & 101 \text{ } 111 \\
    \end{split}
\end{equation*}

With the same notations as in Observation~\ref{obs}, this strategy can use two
nested loops on $q$ and $r$ that generate the outcomes that have $0$ or $1$ in
the target position as in Eq.~\ref{eq:q_and_r}. This strategy is illustrated
in Figure~\ref{fig:strategy_2_diagram}.

\begin{figure}[ht]
    \resizebox{\columnwidth}{!} {
        \begin{tikzpicture}[scale=0.5, every node/.style={scale=0.5}, font=\sffamily,every label/.append
        style={font=\tiny\sffamily,align=center}]
            % >>>>>>>>>>>>> NODES <<<<<<<<<<<<<<<<

            % prefix box
            \node[rectangle, font=\scriptsize\sffamily, fill = white](p1){\textcolor{blue}{0}};
            \node[rectangle, font=\scriptsize\sffamily, fill = white, below=0.4mm of p1](p2){\textcolor{blue}{1}};
            \node[draw, dashed, thin, rounded corners, fit = (p1) (p2), inner xsep=8pt,inner ysep=12pt,
            label={above:{Prefix}}](fit1){};

            % target box
            \node[rectangle, font=\scriptsize\sffamily, fill = white, right=5mm of p1](t1){\textcolor{green}{0}};
            \node[rectangle, font=\scriptsize\sffamily, fill = white, below=0.4mm of t1](t2){\textcolor{green}{1}};
            \node[draw, dashed, thin, rounded corners, fit = (t1) (t2), inner xsep=8pt,inner ysep=12pt,
            label={above:{$t$}}](fit5){};

            % suffix box
            \node[rectangle, font=\scriptsize\sffamily, fill = white, right=5mm of t1](s1){\textcolor{teal}{0}};
            \node[rectangle, font=\scriptsize\sffamily, fill = white, below=0.4mm of s1](s2){\textcolor{teal}{1}};
            \node[draw, dashed, thin, rounded corners, fit = (s1) (s2), inner xsep=8pt,inner ysep=12pt,
            label={above:{Suffix}}](fit2){};

            % pairs
            \node[rectangle, font=\scriptsize\sffamily, fill = white, right=7mm of s1](pair1){
                (\textcolor{blue}{0}\textcolor{green}{0}\textcolor{teal}{0},
                \textcolor{blue}{0}\textcolor{green}{1}\textcolor{teal}{0})};

            \node[rectangle, font=\scriptsize\sffamily, fill = white, below=0.4mm of pair1](pair2){
                (\textcolor{blue}{0}\textcolor{green}{0}\textcolor{teal}{1},
                \textcolor{blue}{0}\textcolor{green}{1}\textcolor{teal}{1})};

            \node[rectangle, font=\scriptsize\sffamily, fill = white, below=0.4mm of pair2](pair3){
                (\textcolor{blue}{1}\textcolor{green}{0}\textcolor{teal}{0},
                \textcolor{blue}{1}\textcolor{green}{1}\textcolor{teal}{0})};

            \node[rectangle, font=\scriptsize\sffamily, fill = white, below=0.4mm of pair3](pair4){
                (\textcolor{blue}{1}\textcolor{green}{1}\textcolor{teal}{1},
                \textcolor{blue}{1}\textcolor{green}{1}\textcolor{teal}{1})};

            \node[rectangle, font=\tiny\sffamily, fill = white, above=0.1mm of pair1](pair_title){Pair};

            \draw[blue, ->, very thin] (p1.east) to (t1.west);
            \draw[blue, ->, very thin] (p1.east) to (t2.west);
            \draw[blue, ->, very thin] (p2.east) to (t1.west);
            \draw[blue, ->, very thin] (p2.east) to (t2.west);
            \draw[green, ->, very thin] (t1.east) to (s1.west);
            \draw[green, ->, very thin] (t1.east) to (s2.west);
            \draw[green, ->, very thin] (t2.east) to (s1.west);
            \draw[green, ->, very thin] (t2.east) to (s2.west);
            \draw[red!50, ->, very thin] (s1.east) to (pair1.west);
            \draw[purple!80, ->, very thin] (s2.east) to (pair2.west);
            \draw[gray!40!yellow!60, ->, very thin] (s1.east) to (pair3.west);
            \draw[gray!25!purple!60, ->, very thin] (s2.east) to (pair4.west);
        \end{tikzpicture}
    }
    \captionof{figure}{Example of generating pairs for a $3$-qubit state with
        target qubit set to $1$ (i.e., $t = 1$) for generating pairs using
        strategy 2; generating the possible prefixes, appending $0$ or $1$ in
        the target position, then generating the possible suffixes and
    appending each to each prefix-target
combination.}\label{fig:strategy_2_diagram}
\end{figure}
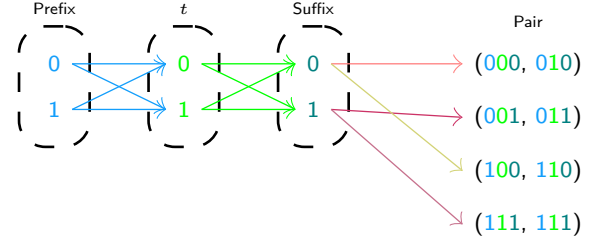

\subsubsection{Strategy 3: Insert Target}

Strategy 3 requires us to generate the prefix-suffix combinations, and then insert the target.
The possible prefix-suffix combinations are:

\begin{equation*}
    00 \text{ } 01 \text{ } 10 \text{ } 11.
\end{equation*}

To generate the pair items with $0$ in the target position, we insert $0$ into the target position, in this case the
middle:

\begin{equation*}
    000 \text{ } 001 \text{ } 100 \text{ } 101.
\end{equation*}

To generate the corresponding pair items with $1$ in the target position, we insert $1$ into the target position of
the same prefix-suffix combination:

\begin{equation*}
    010 \text{ } 011 \text{ } 110 \text{ } 111.
\end{equation*}

This strategy is illustrated in Figure~\ref{fig:strategy_3_diagram}.

\begin{figure}[ht]
    \resizebox{\columnwidth}{!} {
        \begin{tikzpicture}[scale=0.5, every node/.style={scale=0.5}, font=\sffamily,every label/.append
        style={font=\tiny\sffamily,align=center}]
            % >>>>>>>>>>>>> NODES <<<<<<<<<<<<<<<<
            \node[rectangle, font=\scriptsize\sffamily, fill = white](p1){0};
            \node[rectangle, font=\scriptsize\sffamily, fill = white, below=0.4mm of p1](p2){1};
            \node[draw, dashed, thin, rounded corners, fit = (p1) (p2), inner xsep=8pt,inner ysep=12pt,
            label={above:{Prefix}}](fit1){};

            \node[rectangle, font=\scriptsize\sffamily, fill = white, right=3mm of p1](s1){0};
            \node[rectangle, font=\scriptsize\sffamily, fill = white, below=0.4mm of s1](s2){1};
            \node[draw, dashed, thin, rounded corners, fit = (s1) (s2), inner xsep=8pt,inner ysep=12pt,
            label={above:{Suffix}}](fit2){};

            \node[rectangle, font=\scriptsize\sffamily, fill = white, above right=-0.5mm and 4mm of fit2](c1)
                {\textcolor{blue}{0}\textcolor{green}{\_} \textcolor{teal}{0}};

            \node[rectangle, font=\scriptsize\sffamily, fill = white, right=4mm of c1](pair1){
                (\textcolor{blue}{0}\textcolor{green}{0}\textcolor{teal}{0},
                \textcolor{blue}{0}\textcolor{green}{1}\textcolor{teal}{0})};

            \node[rectangle, font=\scriptsize\sffamily, fill = white, below=0.25mm of c1](c2){\textcolor{blue}{0}
            \textcolor{green}{\_} \textcolor{teal}{1}};

            \node[rectangle, font=\scriptsize\sffamily, fill = white, right=4mm of c2](pair2){
                (\textcolor{blue}{0}\textcolor{green}{0}\textcolor{teal}{1},
                \textcolor{blue}{0}\textcolor{green}{1}\textcolor{teal}{1})};

            \node[rectangle, font=\scriptsize\sffamily, fill = white, below=0.25mm of c2](c3){\textcolor{blue}{1}
            \textcolor{green}{\_} \textcolor{teal}{0}};

            \node[rectangle, font=\scriptsize\sffamily, fill = white, right=4mm of c3](pair3){
                (\textcolor{blue}{1}\textcolor{green}{0}\textcolor{teal}{0},
                \textcolor{blue}{1}\textcolor{green}{1}\textcolor{teal}{0})};

            \node[rectangle, font=\scriptsize\sffamily, fill = white, below=0.25mm of c3](c4){\textcolor{blue}{1}
            \textcolor{green}{\_} \textcolor{teal}{1}};

            \node[rectangle, font=\scriptsize\sffamily, fill = white, right=4mm of c4](pair4){
                (\textcolor{blue}{1}\textcolor{green}{1}\textcolor{teal}{1},
                \textcolor{blue}{1}\textcolor{green}{1}\textcolor{teal}{1})};

            \node[rectangle, font=\tiny\sffamily, fill = white, above=0.1mm of pair1](pair_title){Pair};

            \draw[red!50, ->, very thin] (p1.east) to (s1.west);
            \draw[red!50, ->, very thin] (s1.east) to (c1.west);
            \draw[red!50, ->, very thin] (c1.east) to (pair1.west);
            \draw[purple!80, ->, very thin] (p1.east) to (s2.west);
            \draw[purple!80, ->, very thin] (s2.east) to (c2.west);
            \draw[purple!80, ->, very thin] (c2.east) to (pair2.west);
            \draw[gray!40!yellow!60, ->, very thin] (p2.east) to (s1.west);
            \draw[gray!40!yellow!60, ->, very thin] (s1.east) to (c3.west);
            \draw[gray!40!yellow!60, ->, very thin] (c3.east) to (pair3.west);
            \draw[gray!25!purple!60, ->, very thin] (p2.east) to (s2.west);
            \draw[gray!25!purple!60, ->, very thin] (s2.east) to (c4.west);
            \draw[gray!25!purple!60, ->, very thin] (c4.east) to (pair4.west);

        \end{tikzpicture}
    }
    \captionof{figure}{Example of generating pairs for a 3-qubit state with
    target qubit $1$ (i.e., $t = 1$) for generating pairs using strategy 3---generating
prefix-suffix combinations, and then inserting $0$ or $1$ in the target
position.}\label{fig:strategy_3_diagram}
\end{figure}
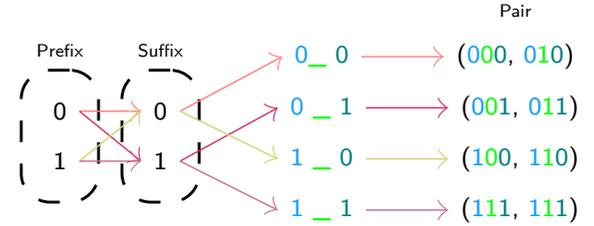

This strategy can be easily generalized for handling control qubits.

If $n > 0$ is the number of binary digits of the possible outcomes, then this strategy can be represented with the
formula below:

\begin{lemma}
    With the same notations as in Observation~\ref{obs}, the following
    closed-form expressions can be used for pair generation:
    \begin{equation*}
        \label{eqn:bit_shift}
        \begin{split}
            z(j, t) &= j + ((j \gg t) \ll t) \\
            o(j, t) &= j + (((j \gg t) + 1) \ll t)
        \end{split}
    \end{equation*}
\end{lemma}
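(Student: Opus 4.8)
The plan is to reduce the two bitwise formulas to the arithmetic identities already proved in Observation~\ref{obs}. Recall that for the prefix--suffix index $j$ and target position $t$ we wrote $j = q\cdot 2^{t} + r$ with $q = \lfloor j/2^{t}\rfloor$ and $0 \le r < 2^{t}$, and that Eq.~\ref{eq:q_and_r} gives $z(j,t) = 2q\cdot 2^{t} + r$ and $o(j,t) = (2q+1)\cdot 2^{t} + r$. So it suffices to show that the right-hand sides of the claimed closed forms evaluate to exactly these two integers.

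First I would record the two elementary facts about shift operators on nonnegative integers: $j \gg t = \lfloor j/2^{t}\rfloor = q$, and for any nonnegative integer $x$, $x \ll t = x\cdot 2^{t}$. Composing these, $(j \gg t)\ll t = q\cdot 2^{t}$ — i.e., the operation that clears the low $t$ bits of $j$, leaving $j$ with its suffix $r$ stripped off. Then the computation is immediate: $j + ((j \gg t)\ll t) = (q\cdot 2^{t} + r) + q\cdot 2^{t} = 2q\cdot 2^{t} + r = z(j,t)$, and likewise $j + (((j \gg t) + 1)\ll t) = (q\cdot 2^{t} + r) + (q+1)\cdot 2^{t} = (2q+1)\cdot 2^{t} + r = o(j,t)$, invoking Eq.~\ref{eq:q_and_r} in the final step of each line.

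The only point that needs a sentence of justification — and the closest this argument comes to an obstacle — is why adding $j$ produces no unwanted carry into bit position $t$ or higher, so that the result really is the "insert a bit at position $t$" operation defining $z$ and $o$. This holds because the added term $q\cdot 2^{t}$ (resp.\ $(q+1)\cdot 2^{t}$) has zeros in positions $0,\dots,t-1$ while $j$ contributes only $r < 2^{t}$ there; the low $t$ bits therefore pass through as $r$ and the remaining bits add cleanly, with $2q$ (resp.\ $2q+1$) occupying positions $\ge t$. One can also phrase this as a bitwise OR of two operands with disjoint support, which is exactly why positions $\ge t$ carry the shifted prefix with a freshly appended $0$ or $1$ and positions $< t$ carry the unchanged suffix.

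To close, I would check the formulas against the running $3$-qubit, $t=1$ example: for $j=3$ one gets $j\gg t = 1$, so $z(3,1) = 3 + (1\ll 1) = 5$ and $o(3,1) = 3 + (2\ll 1) = 7$, agreeing with the pair list computed earlier in Section~\ref{subsec:pair-selection}. Since the whole proof is a substitution followed by this carry-free observation, no further machinery is required, and the statement generalizes verbatim to the control-qubit setting mentioned for Strategy~3.
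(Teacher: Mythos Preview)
Your proof is correct and takes essentially the same approach as the paper: both arguments reduce to the identities $j \gg t = q$, $x \ll t = x \cdot 2^{t}$, and $j = q \cdot 2^{t} + r$, then invoke Eq.~\ref{eq:q_and_r}. The paper runs the computation in the other direction (starting from $z(j,t) = 2q \cdot 2^{t} + r$ and rewriting toward the bit-shift expression) and is terser; your carry-free discussion and worked example are extra elaboration, not a different argument.
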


\begin{proof}

    \begin{equation*}
        \begin{split}
            z(j, t) &= 2(q 2^{t}) + r \\
            &= q2^{t} + q2^{t} + r \\
            &= j + q 2^{t} \\
            &= j + (q \ll t) \\
            &= j + ((j \gg t) \ll t) \\
            \\
            o(j, t) &= 2^t + z(j, t) \\
            &= j + (((j \gg t) + 1) \ll t)
        \end{split}
    \end{equation*}

\end{proof}

Spinoza is implemented such that each gate uses the pair strategy that is optimized for the application of that gate.

\subsection{\label{subsec:amplitude-update}Amplitude Update}

The method of pairing outcomes and amplitudes is at the core of the simulator's implementation of single-qubit gate
applications.
Assume a single-qubit gate is represented in matrix form:

\begin{equation*}
    g = \begin{bmatrix} g_{00} & g_{01} \\ g_{10} & g_{11} \end{bmatrix}.
\end{equation*}

Let $U_{g}$ be the unitary operator corresponding to this single-qubit gate, then its application to a $n$-qubit
quantum state $\ket{\gamma}_{n}$, as defined in Eq.~\ref{eqn:gamma_state}, yields

\begin{equation*}
    U_g\ket{\gamma}_n = \sum_{j = 0}^{2^{n-1}-1} \left( b_{z(j, t)} \ket{z(j, t)}_n +  b_{o(j, t)} \ket{o(j, t)}_n
    \right)
\end{equation*}

where

\begin{equation*}
    \begin{bmatrix} b_{z(j, t)}  \\ b_{o(j, t)}  \end{bmatrix} =
    \begin{bmatrix} g_{00} & g_{01} \\ g_{10} & g_{11} \end{bmatrix}
    \begin{bmatrix} a_{z(j, t)}  \\ a_{o(j, t)}  \end{bmatrix}.
\end{equation*}

Therefore,

\begin{equation*}
    \label{eqn:transform_state}
    \begin{split}
        %    U_g\ket{a}   & = \sum_{j = 0}^{2^{n-1}-1} ((g_{00}a_{z(j, t)} + g_{01}a_{o(j, t)}) \ket{z(j, t)}_n\\
        %    & = \frac{1}{\sqrt{M}}\sum_{k=0}^{M-1}\left(\cos(k\theta) + i\sin(k\theta) \right) \ket{k}_m
        U_g\ket{\gamma}_n  & = \sum_{j = 0}^{2^{n-1}} ( (g_{00}a_{z(j, t)} + g_{01}a_{o(j, t)}) \ket{z(j, t)}_n \\
        &  + (g_{10}a_{z(j, t)} + g_{11}a_{o(j, t)}) \ket{o(j, t)}_n).
    \end{split}
\end{equation*}

Note that qubits are indexed from the right.
Also, to simplify understanding, sometimes we do not distinguish between a gate and its unitary.

\begin{figure}[ht]
    \resizebox{\columnwidth}{!} {
        \begin{tikzpicture}[scale=0.5, every node/.style={scale=0.5}, font=\sffamily,every label/.append
        style={font=\tiny\sffamily,align=center}]
            % >>>>>>>>>>>>> NODES <<<<<<<<<<<<<<<<
            % transformations
            \node[tape, draw, thin, tape bend=none,fill=blue!60!green!45, minimum width=2cm, double copy shadow,
                minimum height=2cm](gates) {TRANSFORMATIONS};

            % processor
            \node[rectangle, font=\small\sffamily, minimum width = 5cm, minimum height = 5cm, draw=blue!40, right=0.6cm
            of gates, label=above:{PROCESSOR}](worker){};
            \node[rectangle, font=\small\sffamily, minimum width = 4cm, minimum height = 1cm, fill=blue!60,
                above right=-0.75cm and -2.25cm of worker](pair_select){pair selection};
            \node[rectangle, font=\small\sffamily, minimum width = 4cm, minimum height = 2cm, fill=blue!20,
                right=-2.25cm of worker, below = 0.5cm of pair_select](update_amps){update amplitudes};

            \node[draw, rectangle split, rectangle split parts = 8, minimum height=1cm, label=above:{STATE}, rectangle
            split  horizontal, fill=orange!60, ultra thin, below=0.5cm of worker](state){$z_0$ \nodepart{two}$z_1$
                \nodepart{three}$z_2$ \nodepart{four}$z_3$  \nodepart{five}$\cdots$ \nodepart{six}$\cdots$
                \nodepart{seven}$\cdots$\nodepart{eight}$z_{N-1}$};

            % >>>>>>>>>>>>> CONNECTIONS <<<<<<<<<<<<<<<<
            \draw[->, > = stealth, dashed, thin] (gates.east) to (worker.west);
            \draw[->, > = stealth, dashed, thin] (pair_select.south) to node[left, font=\scriptsize\sffamily] {pair
            indices} (update_amps.north);
            \draw[->, > = stealth, thin] (update_amps.east) to [out=30,in=30](pair_select.east);
            \draw[->, > = stealth, thin] ([xshift=-7.5mm]update_amps.south) to (state.four north);
            \draw[->, > = stealth, thin] (state.seven north) to ([xshift=13mm]update_amps.south);

        \end{tikzpicture}
    }
    \captionof{figure}{Visualization of the serial implementation of the
    simulator. Transformations are passed to the single processor in order, and
the amplitudes of the state are updated in pairs.}\label{fig:diagram_generic}
\end{figure}
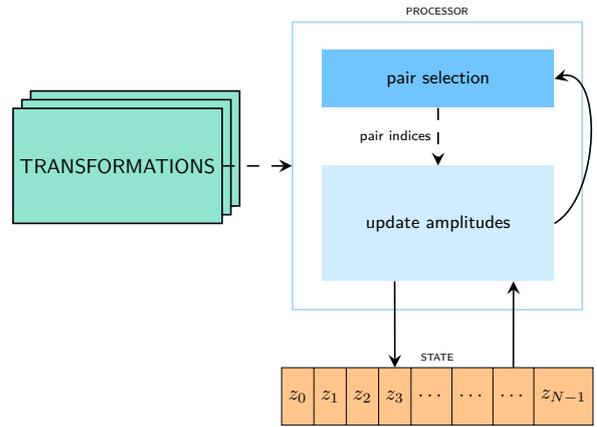

\subsection{\label{subsec:controlled-gate-transformations}Controlled Gate Transformations}

When one or more control qubits are part of a gate transformation, only the
amplitude pairs corresponding to outcomes that have $1$ in the control
positions are being processed.

By default, Spinoza uses the insertion strategy (i.e., Strategy 3), for controlled gate applications.

\section{\label{sec:impl}Implementation}

\subsection{Using Rust}

The exponential growth in run-time for simulating a quantum computer with a
classical machine prompts the need for an efficient simulator. Implementing a
fast quantum simulator requires the use of a programming language that provides
blazingly fast performance, efficient use of memory, and a low level of
control. The C and C++ programming languages fit the aforementioned criteria,
but with many potential issues that are beyond the scope of this paper. The
Rust programming language~\cite{rustbook} suffices all the aforementioned
requirements for a fast quantum simulator, and Rust brings additional benefits
to the table.

The primary benefit of Rust comes from the borrow checker, which helps
eliminate memory safety bugs that continue to plague projects written in C and
C++. Given the safety guarantees that are enforced by the compiler, project
maintainers can spend less time reviewing code submissions.

By implementing the simulator in Rust, quantum computing researchers and quantum software engineers can spend more
time adding new functionality and running simulations, and less time verifying correctness of their code.

\subsection{\label{subsec:functions}Code Structure}

\subsubsection{Single vs. Double Precision}

The implementation discussed and used for all benchmarks (Section~\ref{sec:benchmarks}) in this paper uses
double-precision floating-point values (i.e., variable type with $64$ bits) for representing the amplitudes of
quantum states and the values used for gate transformations.
Users should consider using single-precision (i.e., $32$ bit) floating-point types for additional performance
benefits.
Using single precision allows for more of the state vector to fit into the cache.
For the convenience of users, Spinoza can easily be built and installed to use single-precision complex numbers by
invoking the \texttt{single} feature flag.
In our testing, single-precision complex numbers offered a 30\% performance improvement.
For several use cases, double-precision values may not be required.
Note that the the use of single-precision complex numbers implies that the state vector only requires half the
memory as compared to the amount of memory required when using double-precision complex numbers.
Hence, the use of single-precision complex numbers gives users the ability to introduce an extra qubit into their
quantum state simulation, at the cost of precision.
For example, in the case of $32$ qubits, the state vector would require $2^{32} \cdot 128 \approx 68.72$ GB of memory
when using double-precision complex numbers.
On a machine with only $64$ GB of memory, this would lead to page faults, which would dramatically reduce the
performance of the simulation.
By using single-precision complex numbers, the state vector would only require $2^{32} \cdot 64 \approx 34.36$ GB of
memory, which would allow for fast simulation of the quantum state without reducing the number of qubits.

\subsubsection{\label{subsubsec:state_vec}Quantum State Representation}

A quantum state can be represented as a vector of complex numbers.
In Spinoza, the state is defined as a structure consisting of two vectors of single or double-precision
floating-point types---one for real components and one for the imaginary components of each amplitude---and an
unsigned one byte integer value $n$, or the number of qubits represented by the state.
The Rust implementation of the state vector data structure used in Spinoza  is shown in Appendix~\ref{sec:rust_code},
listing~\ref{statevector}.
Ostensibly, the state vector can be defined as a single vector of single or double precision complex numbers;
however, separating the real and imaginary components into two separate vectors creates opportunities for optimizations
that are not possible with a single vector.
Specific optimizations that leverage this attribute are discussed in Section~\ref{sec:simd}.
The same data structure is used for multiple registers of qubits in one quantum circuit.

There is a method for initializing a state vector, which takes an unsigned integer and initializes a state vector to
$\ket{0}$.
This function is used in the Spinoza example in listing~\ref{functional_example} in Appendix~\ref{sec:rust_examples}.

\subsubsection{Gate Structure}

In a general case, a single-qubit gate is defined as an array of four complex numbers.
Similar to the amplitudes of a quantum state, each complex number is defined by two single or double precision
floating-point types for the real and imaginary components.
While the convention is to define a gate with a two-by-two matrix, a one-dimensional representation offers better
memory locality, less memory consumption, and less allocations~\cite{so}.

For each gate, the most efficient strategy for identifying or generating the pairs is implemented as a method of the
gate.

\subsubsection{Pairs}

Rust implementations for each of the pair selection strategies described in Section~\ref{subsec:pair-selection} are
in Appendix~\ref{sec:rust_code}, listings~\ref{strat0} through~\ref{strat3}.
The best performing pair selection strategy for each gate type is used.
Furthermore, additional optimizations for pair selection are used in specific cases, depending on the gate type and
target qubit.
Note that in the current implementation of Spinoza, the traverse and recognize strategy (strategy 0) is not
used.

When applying single-qubits gates $Y$, $Z$, $P$, $R_{y}$, and $U$ the concatenation strategy (strategy 2) is used for
generating the pairs, as shown  in Appendix~\ref{sec:rust_code}, listing~\ref{strat2}.
The group and traverse strategy (strategy 1), as shown in Appendix~\ref{sec:rust_code}, listing~\ref{strat1} is used
for applying the $R_{z}$ gate.
When applying the single-qubit gates $X$, $H$, and $R_{x}$, the concatenation strategy (strategy 2) is also used, with
an additional optimization for when the target qubit is $0$, which is explained in Section~\ref{sec:opt}.

When applying control and multi-control gate transformations, pair selection strategies can be optimized.
For example, a modified version double shift strategy (strategy 3), as shown in Appendix~\ref{sec:rust_code},
listing~\ref{strat3_control}, is only used for controlled gate applications.

\subsubsection{Gate Transformation/Amplitude Update}

As discussed in Section~\ref{sec:design}, gate transformations are executed by identifying pairs of amplitudes and
updating those amplitudes according to the gate coefficients.
In Spinoza, gate transformations can be applied to a state using the functional syntax used in the example in
Appendix~\ref{sec:rust_code} listing~\ref{strat1}, or using the circuit syntax discussed in the next section.
The functional approach gives the developer control and flexibility when designing an algorithm.

\subsubsection{Quantum Circuits}

Spinoza includes methods that allow for quantum circuit development with a circuit syntax comparable to
Qiskit's~\cite{qiskit}.
Listing~\ref{circuit_example} in Appendix~\ref{sec:rust_examples} defines and executes a quantum circuit using the
circuit syntax.
Two data structures makeup the circuit components: registers and circuits.
A \texttt{QuantumRegister} is a vector of integers (\texttt{usize}) which correspond to indices of a state.
A \texttt{QuantumCircuit} is a vector of transformations and a state vector which reflects the register or registers
used in the circuit.
A quantum transformation consists of the gate type, target qubit, and optional controls and parameters.

To build a circuit, one or more registers of qubits is initialized, each with a given number of qubits
(Appendix~\ref{sec:rust_examples} listing~\ref{circuit_example}, line 2).
When initializing a circuit, one or more registers of qubits can be used, but only one state structure is defined in
the implementation.
The \texttt{QuantumRegister} facilitates indexing to the intended qubit when writing a circuit.
Gate transformations are then added to the circuit using methods of \texttt{QuantumCircuit} which include common
gate types (Pauli-gates, $U$-gate), controlled gates and multi-control gates.
In Appendix~\ref{sec:rust_examples} listing~\ref{circuit_example}, Hadamard ($H$) gates are added to the circuit on
line 6, and Phase ($P$) gates are added on line $10$.
For convenience when writing quantum circuits, Spinoza has an inverse quantum Fourier transform function,
\texttt{iqft}, that can be added to a circuit (or applied to a state) with the same syntax as a single gate
application (line $14$).
The transformations are not applied to the state until the \texttt{execute} method is called, as shown in
listing~\ref{circuit_example} line 15.
Using the circuit implementation allows for additional optimizations as the order and quantity of gates can be known
before execution.

\subsubsection{Measurement}

We provide a sampling method, \texttt{get\_samples}, that can be used to simulate measurement results for a given
number of shots.
Sampling is executed by utilizing weighted reservoir sampling~\cite{reservoir-sampling}.
An example of sampling (using the Python interface) is shown in Appendix~\ref{sec:python_examples}.

\subsection{\label{subsec:py_interface}Python Interface}

Python is a popular and frequently used programming language in the scientific community.
Spinoza can also be used as a Python library to facilitate easy installation and development for users.
The Spinoza Python library is designed to be similar to Qiskit to make it easy for researchers to use.
The library is implemented using bindings which call Rust code to perform the computations.
The Python bindings were implemented using PyO3~\cite{pyo3}.

Using Python to call Rust functions adds a negligible overhead.
The Python wrapper is used for the benchmarks in figure~\ref{fig:benchmarks}.

\section{\label{sec:opt}Optimizations}

\subsection{Gate Optimizations}

\subsubsection{X-Gate}

\begin{equation*}
    X =
    \begin{bmatrix}
    0 & 1 \\
    1 & 0
    \end{bmatrix}
\end{equation*}

Since the $X$-gate simply swaps the two sides of the pair, we can completely avoid any arithmetic operations.

When \texttt{target = 0}, we have a special case in which \texttt{distance = 1}.
In this case, we can iterate over the state by stepping by two elements per iteration.

\subsubsection{Y-Gate}

\begin{equation*}
    Y =
    \begin{bmatrix}
    0 & -i \\
    i & 0
    \end{bmatrix}
\end{equation*}

Let the two sides of the pair be $\gamma_{0}, \gamma_{1}$, where $\gamma_{0} = a + ib$ and $\gamma_{1} = c + id$.
The application of the $Y$-gate to this pair can be expressed as follows:

\begin{equation*}
\begin{split}
    Y\begin{bmatrix} \gamma_{0} \\ \gamma_{1} \end{bmatrix} &=
    \begin{bmatrix}
        0 & -i \\
        i & 0
    \end{bmatrix}
    \begin{bmatrix}
        \gamma_{0} \\
        \gamma_{1}
    \end{bmatrix} \\
    &=
    \begin{bmatrix}
        0 & -i \\
        i & 0
    \end{bmatrix}
    \begin{bmatrix}
        a + ib \\
        c + id
    \end{bmatrix} \\
    &=
    \begin{bmatrix}
         d - ic \\
        -b + ia
    \end{bmatrix}
\end{split}
\end{equation*}

\noindent
We can avoid the extra computations associated with matrix-vector multiplication by noticing that the pair can be
``updated'' by swapping the vector components, swapping the real and imaginary parts in each component, and by
flipping the signs.

\noindent
In addition, we apply a loop nest optimization (LNO) when using the concatenation strategy.
Namely, while appending the target, it is apparent that \texttt{((i >> target) << target)} rarely changes.
Rather, we can explicitly compute the prefixes.
As a result, the iteration logic transforms into the following: \texttt{base + 0, base + 1, base + 2, \ldots,
base + n}.
This transformation presents a vectorization opportunity for the compiler.
On the other hand, computing \texttt{i + ((i >> target) << target)} may cause a jump to something ``random'', so the
compiler is unable to leverage vectorization.

\subsubsection{Z-Gate}

\begin{equation*}
    Z =
    \begin{bmatrix}
        1 &  0 \\
        0 & -1
    \end{bmatrix}
\end{equation*}

Let the two sides of the pair be $\gamma_{0}, \gamma_{1}$, where $\gamma_{0} = a + ib$ and $\gamma_{1} = c + id$.
The application of the $Z$-gate to this pair can be expressed as follows:

\begin{equation*}
\begin{split}
    Z\begin{bmatrix} \gamma_{0} \\ \gamma_{1} \end{bmatrix} &=
    \begin{bmatrix}
        1 &  0 \\
        0 & -1
    \end{bmatrix}
    \begin{bmatrix}
        \gamma_{0} \\
        \gamma_{1}
    \end{bmatrix} \\
    &=
    \begin{bmatrix}
        1 &  0 \\
        0 & -1
    \end{bmatrix}
    \begin{bmatrix}
        a + ib \\
        c + id
    \end{bmatrix} \\
    &=
    \begin{bmatrix}
         a + ib \\
        -c - id
    \end{bmatrix}
\end{split}
\end{equation*}

\noindent
Given the above result, it is clear that only one side of the pair is needed when applying the $Z$-gate.
Traversing the state while applying the $Z$-gate will have greater spatial locality as compared to the application
of other gates.
Namely, regardless of the \texttt{target}, the application of the $Z$-gate will be a cache friendly operation.
Thus, we can reduce the number of operations associated with matrix-vector, as well as reduce the required memory
bandwidth.

\subsubsection{Hadamard Gate}

\begin{equation*}
    H =
    \begin{bmatrix}
        \frac{1}{\sqrt{2}} & \frac{1}{\sqrt{2}} \\
        \frac{1}{\sqrt{2}} & -\frac{1}{\sqrt{2}}
    \end{bmatrix}
\end{equation*}

Let the two sides of the pair be $\gamma_{0}, \gamma_{1}$, where $\gamma_{0} = a + ib$ and $\gamma_{1} = c + id$.
The application of the $H$-gate to this pair can be expressed as follows:

\begin{equation*}
\begin{split}
    H\begin{bmatrix} \gamma_{0} \\ \gamma_{1} \end{bmatrix} &=
    \begin{bmatrix}
        \frac{1}{\sqrt{2}} & \frac{1}{\sqrt{2}} \\
        \frac{1}{\sqrt{2}} & -\frac{1}{\sqrt{2}}
    \end{bmatrix}
    \begin{bmatrix}
        \gamma_{0} \\
        \gamma_{1}
    \end{bmatrix} \\
    &=
    \begin{bmatrix}
        \frac{1}{\sqrt{2}} & \frac{1}{\sqrt{2}} \\
        \frac{1}{\sqrt{2}} & -\frac{1}{\sqrt{2}}
    \end{bmatrix}
    \begin{bmatrix}
        a + ib \\
        c + id
    \end{bmatrix} \\
    &=
    \begin{bmatrix}
    \frac{a}{\sqrt{2}} + \frac{c}{\sqrt{2}} + i\left(\frac{b}{\sqrt{2}} + \frac{d}{\sqrt{2}}\right) \\
    \frac{a}{\sqrt{2}} - \frac{c}{\sqrt{2}} + i\left(\frac{b}{\sqrt{2}} - \frac{d}{\sqrt{2}}\right) \\
    \end{bmatrix} \\
    &=
    \begin{bmatrix}
        (a_{1} + c_{1}) +  i(b_{1} + d_{1}) \\
        (a_{1} - c_{1}) +  i(b_{1} - d_{1})
    \end{bmatrix}
\end{split}
\end{equation*}

where $a_{1} = \frac{a}{\sqrt{2}}, b_{1} = \frac{b}{\sqrt{2}}, c_{1} =
\frac{c}{\sqrt{2}}, d_{1} = \frac{d}{\sqrt{2}}$.

\noindent
Hence, we can reduce the number of operations associated with matrix-vector multiplication by taking note of the
general structure of applying the Hadamard gate.

\subsubsection{Phase Gate}

\begin{equation*}
    P =
    \begin{bmatrix}
        1 &  0 \\
        0 & \cos(\phi) + i \sin(\phi)
    \end{bmatrix}
\end{equation*}

Let the two sides of the pair be $\gamma_{0}, \gamma_{1}$, where $\gamma_{0} = a + ib$ and $\gamma_{1} = c + id$.
The application of the $P$-gate to this pair can be expressed as follows:

\begin{equation*}
\begin{split}
    P\begin{bmatrix} \gamma_{0} \\ \gamma_{1} \end{bmatrix} &=
    \begin{bmatrix}
        1 &  0 \\
        0 & \cos(\phi) + i \sin(\phi)
    \end{bmatrix}
    \begin{bmatrix}
        \gamma_{0} \\
        \gamma_{1}
    \end{bmatrix} \\
    &=
    \begin{bmatrix}
        1 &  0 \\
        0 & \cos(\phi) + i \sin(\phi)
    \end{bmatrix}
    \begin{bmatrix}
        a + ib \\
        c + id
    \end{bmatrix} \\
    &=
    \begin{bmatrix}
         a + ib \\
        c \cos(\phi) - d \sin(\phi) + i\left(c \sin(\phi) + d \cos(\phi) \right)
    \end{bmatrix}
\end{split}
\end{equation*}

\noindent
The $Z$-gate is a special case of the $P$-gate, so the $P$-gate is able to leverage similar optimizations.
Namely, the structure of applying on the $P$-gate to a pair creates an opportunity to avoid the operations
associated with the general case of matrix-vector multiplication.
Moreover, the $P$-gate only requires one side of the pair to be modified.
Hence, the application of the $P$-gate allows for greater use of spatial locality.

\subsubsection{$R_{x}$-Gate}

The $R_{x}$-gate requires general matrix-vector multiplication, so we leverage fused multiply-add (FMA) instructions to
improve the performance of matrix-vector multiplication.

\subsubsection{$R_{y}$-Gate}

Similar to the $R_{x}$-gate, the $R_{y}$-gate requires general matrix-vector multiplication, so we leverage FMA
instructions to improve the performance of matrix-vector multiplication.

\subsubsection{$R_{z}$-Gate}

\begin{equation*}
    R_{z}(\lambda) =
    \begin{bmatrix}
        e^{-i \frac{\lambda}{2}} &  0 \\
        0                        & e^{i \frac{\lambda}{2}}
    \end{bmatrix}
\end{equation*}

Let the two sides of the pair be $\gamma_{0}, \gamma_{1}$, where $\gamma_{0} = a + ib$ and $\gamma_{1} = c + id$.
The application of the $R_{z}$-gate to this pair can be expressed as follows:

\begin{equation*}
\begin{split}
    R_{z}(\lambda) \begin{bmatrix} \gamma_{0} \\ \gamma_{1} \end{bmatrix}
    &=
    \begin{bmatrix}
        e^{-i \frac{\lambda}{2}} &  0 \\
        0                        & e^{i \frac{\lambda}{2}}
    \end{bmatrix}
    \begin{bmatrix}
        \gamma_{0} \\
        \gamma_{1}
    \end{bmatrix} \\
    &=
    \begin{bmatrix}
        e^{-i \frac{\lambda}{2}} &  0 \\
        0                        & e^{i \frac{\lambda}{2}}
    \end{bmatrix}
    \begin{bmatrix}
        a + ib \\
        c + id
    \end{bmatrix} \\
    &=
    \begin{bsmallmatrix}
         a \cos\left(\frac{\lambda}{2}\right) + b \sin\left(\frac{\lambda}{2}\right) + i\left(b \cos\left(\frac{\lambda}{2}\right) - a \sin\left(\frac{\lambda}{2}\right) \right) \\
         c \cos\left(\frac{\lambda}{2}\right) - d \sin\left(\frac{\lambda}{2}\right) + i\left(c \sin\left(\frac{\lambda}{2}\right) + d \sin\left(\frac{\lambda}{2}\right) \right)
    \end{bsmallmatrix}
\end{split}
\end{equation*}

Since the $R_{z}$ matrix is sparse, we can reduce the total number of operations associated with matrix-vector multiplication.

\subsubsection{U-Gate}

Since the $U$-gate is the most general single-qubit gate, we leverage FMA instructions to improve the performance of
matrix-vector multiplication.

\subsection{\label{sec:simd}SIMD}

SIMD (Single Instruction Multiple Data) capable CPUs have been available since
the introduction of the MMX instruction set~\cite{intel}. In essence, SIMD
operations enable the processing of multiple data with a single instruction.
At compile time, Spinoza discerns the CPU type of the local machine, which
allows for the enablement of all instruction subsets supported by the local
machine. Modern compilers are
able to generate highly optimized series of instructions when provided with
simple and predictable code. As such, the loops containing the ``fast paths''
in the strategies for each gate are manipulated such that the
generated code is as long as possible, while only containing low-latency
instructions such as \texttt{add}, \texttt{mul}, etc.

Spinoza's implementation precludes the need for manual SIMD\@. The compiler is
able to optimize the code given the highly optimized ``fast paths'', coupled
with the memory layout. The lack of manual SIMD increases the portability of
Spinoza. Although manual SIMD can potentially reduce the number of generated,
arithmetic, instructions, it is important to note that quantum state simulation
is highly memory bound. As a result, the potential savings from handwritten
SIMD would not provide any tangible performance benefits. Moreover, use of
manual SIMD would necessarily increase the size and complexity of Spinoza due
to the need to account for single-precision numbers, double precision numbers,
as well as a myriad of architectures.

\section{\label{sec:benchmarks}Benchmarks \& Comparisons}

We compare the performance of Spinoza against several publicly available quantum simulators.
We used the public GitHub repository~\cite{benchmark_repo}, in which benchmarking code has been reviewed by
contributors from various libraries.
We use the environment setup and configurations as provided in the repository, and compare against the equivalent
settings for Spinoza.
We used the latest version of each library, as listed in Table~\ref{tab:1}, unless otherwise specified in the
environment setup files provided in the benchmarking repository (i.e. Qiskit).
We use Spinoza's python bindings for a relevant comparison.
The implementation of the parameterized quantum circuit used for benchmarking can be found in the
repository~\cite{benchmark_repo}.

%\begin{table}[!htbp]
\begin{table}[!h]
    \centering
    \begin{tabular}{ l|l|l  }
        \hline
        Library & Language & Version \\
        \hline
        Qulacs~\cite{qulacs} & C++/Python & 0.6.0 \\
        ProjectQ~\cite{project} & C++/Python & 0.8.0 \\
        Qiskit~\cite{qiskit} & C++/Python & 0.16.0 \\
        Pennylane~\cite{pennylane} & Python & 0.30.0 \\
        Spinoza & Rust/Python & 0.1.0 \\
        \hline
    \end{tabular}
    \caption{Packages used in benchmarking. }
    \label{tab:1}
\end{table}

\begin{figure}[!htbp]
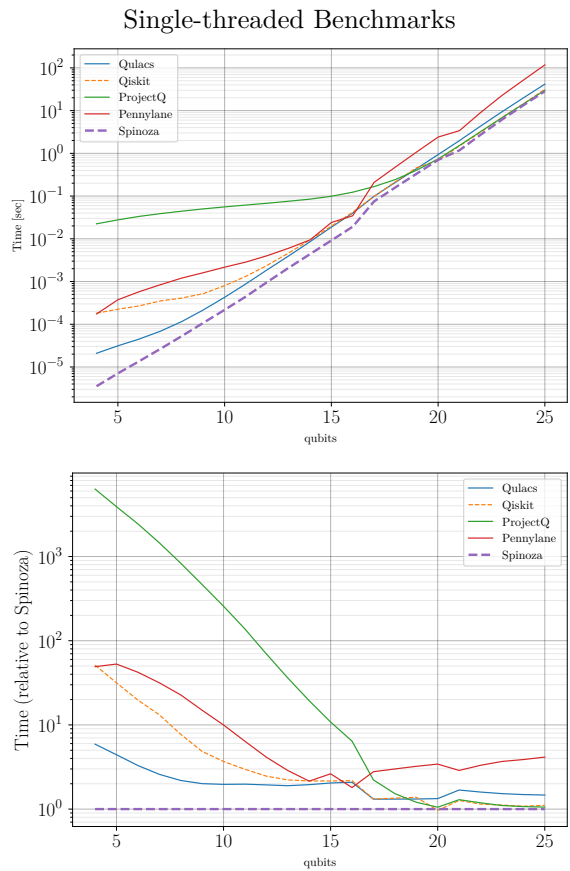

    \centering
    \begin{minipage}{.5\textwidth}
        \centering
        \title{Single-threaded Benchmarks}
        \scalebox{0.45}{\input{images/fig_compare_singlethreaded.pgf}}
    \end{minipage}
    \bigskip
    \begin{minipage}{.5\textwidth}
        \centering
        \scalebox{0.45}{\input{images/fig_ratio_singlethreaded.pgf}}
    \end{minipage}
    \caption{Benchmark times for simulating random quantum circuits, as implemented in
    ~\cite{benchmark_repo}, using a single thread.}
    \label{fig:benchmarks}
\end{figure}

All benchmarks were performed on a \texttt{c2-standard-16} instance, with 1 vCPU per core ratio, on Google Cloud
Platform (GCP).
Spinoza is compiled with \texttt{rustc} version $1.67.1$,and the following compiler flags: \texttt{-C opt-level=3 -C
target-cpu=native --edition=2021}.
In addition, we set \texttt{codegen-units = 1}, \texttt{lto = true}, \texttt{panic = "abort"}.
Lastly, all benchmarks were run using Python $3.8.16$.

Additional benchmarks using Spinoza in Rust and Qulacs in C++ are in Appendix~\ref{sec:rust_benchmarks}.

\subsection{Conclusion and Future Work}

Currently, Spinoza is capable of fast quantum state simulation on a myriad of machines when it is only given one
thread of execution to leverage.
Moreover, Spinoza has an edge over other simulators as shown in Section~\ref{sec:benchmarks}.
In future work, Spinoza will be parallelized.
In addition, Spinoza will be updated to allow for distributed quantum state simulation.

%%%%%%%%%%%%%%%%%%%%%%%%%%%%%%%%%%%%%%%%%%%%%%%%%%%%%%%%%%%%%%%%%%%%%%%%%%%
\acknowledgements
The authors thank Orson R. L. Peters for discussions on code optimization and
review. The authors thank Vitaliy Dorum for discussions on design and overall
code review.

The views expressed in this article are those of the authors and do not
represent the views of Wells Fargo. This article is for informational purposes only.
Nothing contained in this article should be construed as investment advice. Wells Fargo
makes no express or implied warranties and expressly disclaims all legal, tax,
and accounting implications related to this article.\\
%%%%%%%%%%%%%%%%%%%%%%%%%%%%%%%%%%%%%%%%%%%%%%%%%%%%%%%%%%%%%%%%%%%%%%%%%%%

%%%%%%%%%%%%%%%%%%%%%%%%%%%%%%%%%%%%%%%%%%%%%%%%%%%%%%%%%%%%%%%%%%%%%%%%%%%
\bibliographystyle{unsrtnat}
\bibliography{refs} % Entries are in the refs.bib file
%%%%%%%%%%%%%%%%%%%%%%%%%%%%%%%%%%%%%%%%%%%%%%%%%%%%%%%%%%%%%%%%%%%%%%%%%%%

%%%%%%%%%%%%%%%%%%%%%%%%%%%%%%%%%%%%%%%%%%%%%%%%%%%%%%%%%%%%%%%%%%%%%%%%%%%
\onecolumn
\renewcommand\floatpagefraction{0.9}
\appendix
%%%%%%%%%%%%%%%%%%%%%%%%%%%%%%%%%%%%%%%%%%%%%%%%%%%%%%%%%%%%%%%%%%%%%%%%%%%

\section{\label{sec:rust_code}Data Structures and Pair Strategies in Rust}

\lstinputlisting[caption={State structure defintion. \texttt{Float} can be type \texttt{f32} or \texttt{f64}---depending on the feature flag chosen.},
    label={statevector}]{code/state.rs}

\lstinputlisting[caption={Rust implementation Strategy 0 - Traverse and Recognize.},label={strat0}]{code/strat0_traverse_recognize.rs}

\lstinputlisting[caption={Rust implementation Strategy 1 - Group and Traverse.},
    label={strat1}]{code/strat1_group_and_traverse.rs}

\lstinputlisting[caption={Rust implementation of Strategy 2 - Concatenation.},label={strat2}]{code/strat2_loop_nest_opt.rs}

\lstinputlisting[caption={Rust implementation of Strategy 3 - Insertion.},label={strat3}]{code/strat3_double_shift.rs}

\lstinputlisting[caption={Rust implementation of Strategy 3 - Insertion, for controlled gate transformations.},
    label={strat3_control}]{code/strat3_double_shift_control.rs}

\section{\label{sec:rust_examples}Using Spinoza: Rust}

\lstinputlisting[caption={Value Encoding, as described in~\cite{interpolation_paper}, implemented using Spinoza.},
    label={functional_example}]{code/functional_example.rs}

\lstinputlisting[caption={Value Encoding circuit, as described in~\cite{interpolation_paper}, implemented using Spinoza.},
    label={circuit_example}]{code/circuit_example.rs}

\section{\label{sec:python_examples}Using Spinoza: Python}

\lstinputlisting[language=Python, caption={Value Encoding circuit, as described in~\cite{interpolation_paper},
implemented using the Spinoza Python Library.}, label={py_circuit_example}]{code/circuit_python.py}

\vspace{1cm}
\twocolumn
\lstinputlisting[language=Python, caption={Using the resulting statevector from running the value encoding circuit
defined in listing~\ref{py_circuit_example} on three qubits with the parameter 2.4 we get 1,000 samples using
reservoir sampling}, label={py_sampling_example}]{code/sampling.py}

\begin{figure}[!h]
    \centering
    \includegraphics[width=.95\linewidth]{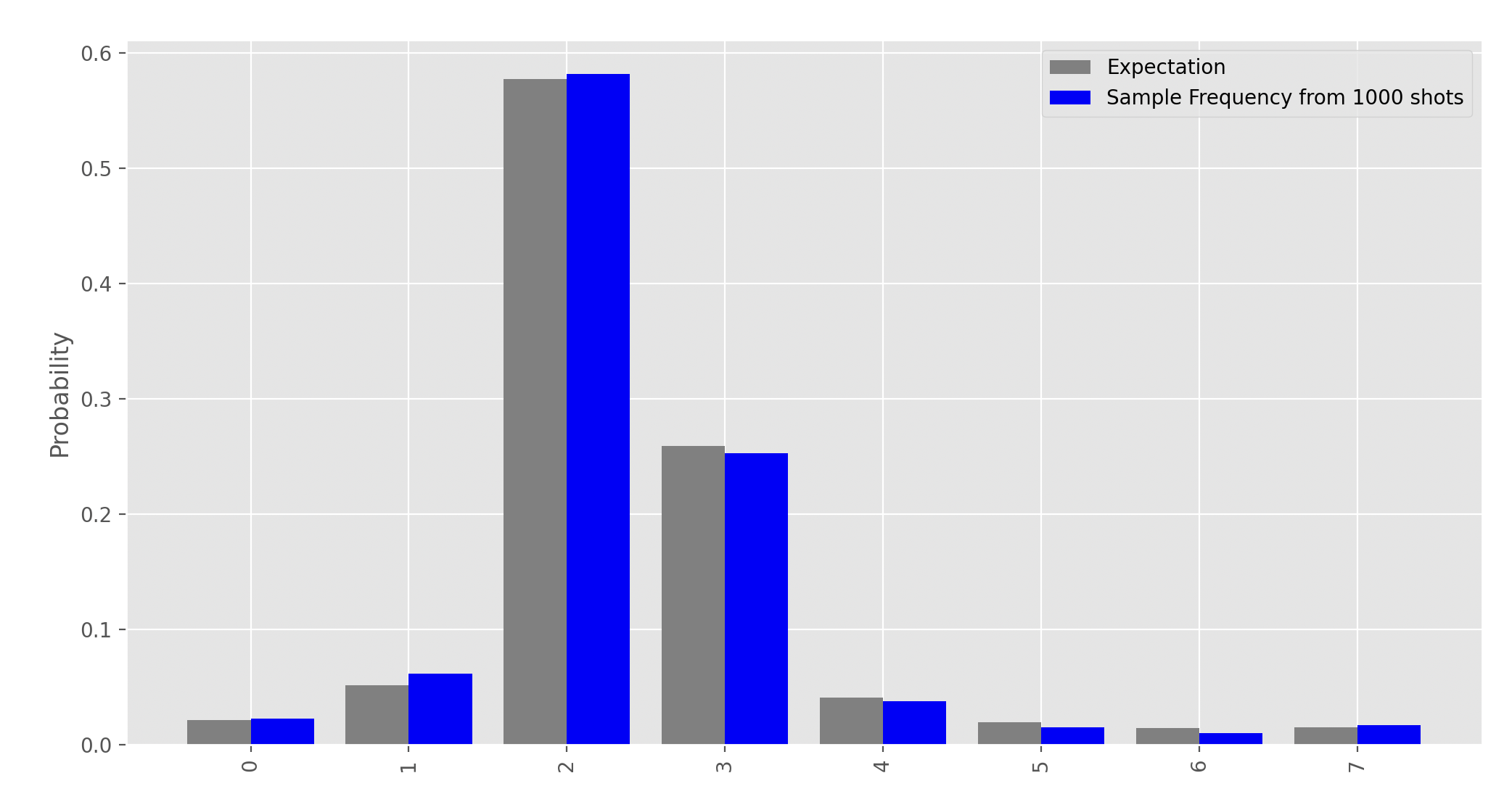}
    \captionof{figure}{Expected probabilities versus sample frequencies from the \texttt{state} and \texttt{samples}
    in listing~\ref{py_sampling_example}.}
\end{figure}

\newpage
\onecolumn
\section{\label{sec:rust_benchmarks}Rust Benchmarks}

The following benchmarks were implemented using the Rust implementation of Spinoza and the C++ implementation of Qulacs.
This allows for comparison without concern for the overhead from using the respective Python libraries.

\begin{figure}[!ht]
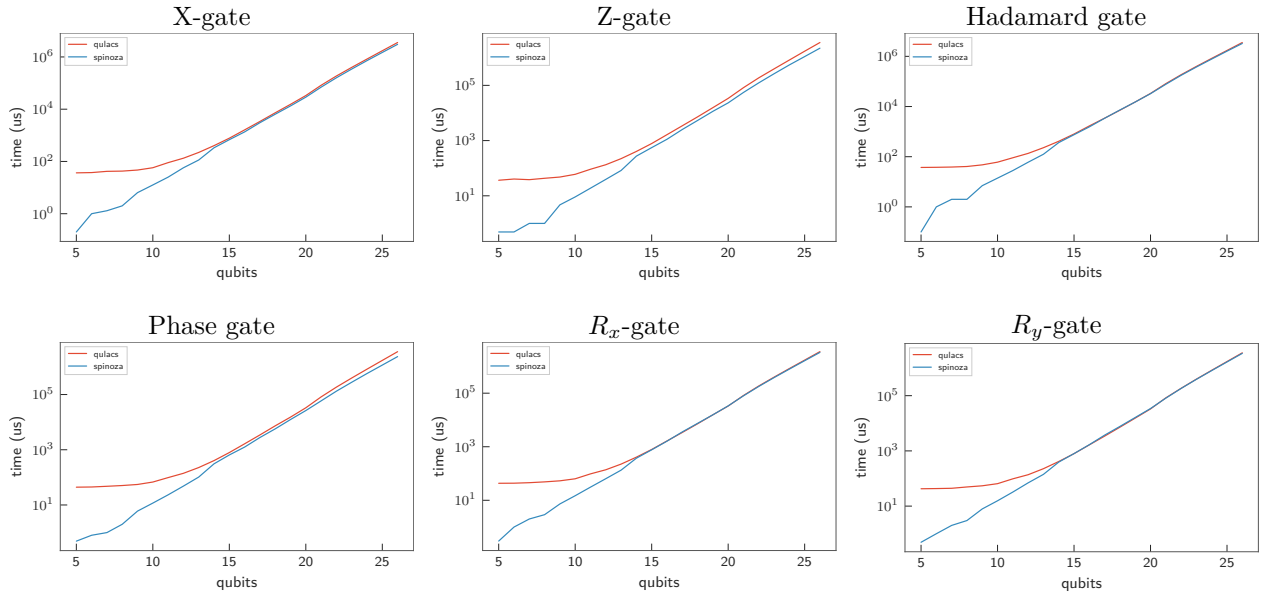

    \centering
    \begin{minipage}{.32\textwidth}
        \centering
        \title{X-gate}
        \scalebox{0.65}{\input{images/x-gate-log-linear.pgf}}
    \end{minipage}
    \begin{minipage}{.32\textwidth}
        \centering
        \title{Z-gate}
        \scalebox{0.65}{\input{images/z-gate-log-linear.pgf}}
    \end{minipage}
    \begin{minipage}{.32\textwidth}
        \centering
        \title{Hadamard gate}
        \scalebox{0.65}{\input{images/h-gate-log-linear.pgf}}
    \end{minipage}

    \bigskip
    \centering
    \begin{minipage}{.32\textwidth}
        \centering
        \title{Phase gate}
        \scalebox{0.65}{\input{images/p-gate-log-linear.pgf}}
    \end{minipage}
    \begin{minipage}{.32\textwidth}
        \centering
        \title{$R_{x}$-gate}
        \scalebox{0.65}{\input{images/rx-gate-log-linear.pgf}}
    \end{minipage}
    \begin{minipage}{.32\textwidth}
        \centering
        \title{$R_{y}$-gate}
        \scalebox{0.65}{\input{images/ry-gate-log-linear.pgf}}
    \end{minipage}
    \captionof{figure}{Average time (logarithmic scale) to execute ten iterations of applying the given gate to each qubit in a system
    for five to twenty-six qubits.}
    \label{fig:rust_benchmarks}
\end{figure}

\begin{figure}[!htbp]
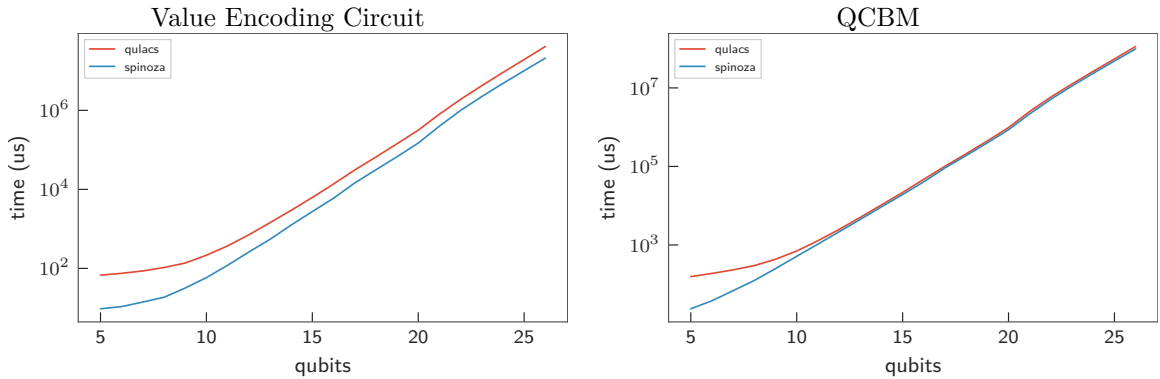

    \centering
    \begin{minipage}{.45\textwidth}
        \centering
        \title{Value Encoding Circuit}
        \scalebox{0.9}{\input{images/value_encoding-gate-log-linear.pgf}}
    \end{minipage}
    \begin{minipage}{.45\textwidth}
        \centering
        \title{QCBM}
        \scalebox{0.9}{\input{images/qcbm-gate-log-linear.pgf}}
    \end{minipage}
    \captionof{figure}{Average time (logarithmic scale) to execute ten iterations of the "QCBM" quantum circuits (as implemented in
    ~\cite{benchmark_repo}) for five to twenty-six qubits.}
\end{figure}

\end{document}